\documentclass[
    aps,
    pra,
    reprint,
    amsmath,
    amssymb,
    superscriptaddress,
    longbibliography
]{revtex4-2}

\usepackage{graphicx}
\usepackage{booktabs}
\usepackage{bm}
\usepackage{hyperref}
\usepackage{mathtools}
\usepackage{subcaption}
\usepackage{float}
\usepackage{amsthm}
\usepackage{dsfont}

\hypersetup{
    colorlinks=true,
    linkcolor=blue,
    citecolor=blue,
    urlcolor=blue
}

\newcommand{\Tr}{\operatorname{Tr}}
\newcommand{\Hil}{\mathcal{H}}
\newcommand{\cS}{\mathcal{S}}
\newcommand{\cM}{\mathcal{M}}
\newcommand{\cP}{\mathcal{P}}
\newcommand{\id}{I}

\newtheorem{theorem}{Theorem}

\newtheorem{proposition}{Proposition}
\newtheorem{definition}{Definition}

\begin{document}

\title{Robust multi-hypothesis quantum-state discrimination under unknown common unitary perturbations via least favorable priors}

\author{Daichi Fujiki}
\email{fujiki@sigmath.es.osaka-u.ac.jp}
\affiliation{Graduate School of Engineering Science, The University of Osaka, 1-3 Machikaneyama, Toyonaka, Osaka 560-8531, Japan}

\author{Fuyuhiko Tanaka}
\email{ftanaka.celas@osaka-u.ac.jp}
\affiliation{Center for Education in Liberal Arts and Sciences, The University of Osaka, 1-16 Machikaneyama, Toyonaka, Osaka 560-0043, Japan}
\affiliation{Center for Quantum Information and Quantum Biology, The University of Osaka, 1-2 Machikaneyama, Toyonaka, Osaka 5600043, Japan}

\date{\today}

\begin{abstract}
We study robust K-ary quantum-state discrimination when all candidate states are affected by the same unknown common unitary perturbation.
The unknown perturbation does not represent the label to be identified, but acts as a nuisance factor that changes the performance of a fixed measurement.
We formulate the problem as a minimax decision problem over the possible perturbations and propose using the Bayes-optimal collective measurement associated with a least favorable prior (LFP) on the nuisance-parameter space.
For a finite discretization of this space, we show that the LFP can be computed by a semidefinite program and that the corresponding value coincides with the finite-grid minimax success probability.
As numerical demonstrations, we consider a binary nonorthogonal qubit model and a nonorthogonal three-state qutrit model with an unknown common unitary perturbation.
The LFP-based measurement substantially flattens the success-probability profile and improves the worst-case success probability compared with a reference-point optimal measurement and a uniform-prior Bayes measurement.
The resulting LFP concentrates its weight on regions of the nuisance-parameter space that actively limit the robust discrimination performance, thereby providing both a constructive measurement design and a diagnostic description of the difficult nuisance-parameter regimes.
\end{abstract}

\maketitle

%%%%%%%%%%%%%%%%%%%%%%%%%%%%%%%%%%%%%%%%%%%%%%%%%%%%%%%%%%%%%%%%%%%%%%%%%%
\section{Introduction}
\label{sec:introduction}
%%%%%%%%%%%%%%%%%%%%%%%%%%%%%%%%%%%%%%%%%%%%%%%%%%%%%%%%%%%%%%%%%%%%%%%%%%

Quantum-state discrimination is a central problem in quantum information theory, with applications ranging from quantum communication and cryptography to metrology and sensing \cite{chefles2000,barnett2009,bae2015,holevo1982}.
Given a finite ensemble of candidate states $\{\rho_a\}_{a=1}^K$ and prior probabilities, the measurement maximizing the average discrimination success probability is characterized by the theory of minimum-error quantum detection developed by Helstrom, Holevo, and Yuen--Kennedy--Lax \cite{helstrom1976,holevo1973,yuen1975,watrous2018}.
The resulting optimization problem admits a semidefinite-programming formulation, and its optimality conditions can be characterized through convex duality \cite{eldar2003,nakahira2015}.

In realistic experiments, however, the candidate ensemble is often not known exactly.
Examples include unknown phase drifts in optical communication, misalignment of reference frames \cite{bartlett2007}, calibration errors in measurement devices, slowly varying unitary drifts, and common rotations induced by a communication channel.
A simple and important model of such uncertainty is
\begin{equation*}
    \rho_a
    \longmapsto
    \rho_a(\theta)
    =
    U(\theta)\rho_a U(\theta)^\dagger ,
    \qquad
    \theta\in\Theta ,
    \label{eq:intro_common_unitary}
\end{equation*}
where the same unknown unitary perturbation acts on all candidate states.
The parameter $\theta$ is not itself the object to be estimated.
Such a parameter is referred to as a nuisance parameter~\cite{suzuki2020}.
It affects the discrimination performance of a fixed measurement, but the decision to be made is still the candidate label $a$.
Problems involving unknown symmetry transformations and covariant quantum measurements, which are closely related to the common-unitary uncertainty considered here, have been studied in quantum estimation and decision theory
\cite{holevo1979covariant,chiribella2005}.

This distinction is important.
The task considered here is not the joint identification of $(a,\theta)$ and not a composite hypothesis test in which $\theta$ labels different hypotheses.
Instead, the observer must choose a single measurement, without knowing $\theta$, that identifies $a$ reliably for all relevant values of $\theta$.
This naturally leads to a minimax criterion:
one seeks a measurement whose worst-case risk over the nuisance-parameter space is as small as possible.

In classical statistical decision theory, minimax procedures and least favorable priors are standard tools for robust estimation and testing \cite{wald1950,ferguson1967,berger1985,lecam1986,huber2009}.
Analogous ideas have also been studied in quantum decision theory. Minimax formulations have been developed for quantum-state discrimination, discrimination with inconclusive outcomes, and quantum-channel discrimination \cite{dariano2005,nakahira2013minimax,dariano2005pauli}. More general quantum statistical decision-theoretic results, including minimax theorems and least favorable priors, have also been established \cite{tanaka2014,Luczak2025Quantum}.
Hunt--Stein-type arguments and symmetry-based reductions provide another route to minimax solutions in quantum statistical problems with nuisance parameters \cite{kumagai2013}. 
In quantum estimation theory, nuisance parameters have been studied from the viewpoint of attainable precision and information geometry \cite{suzuki2019,suzuki2020}.

Robust quantum measurement design has also been investigated in specific discrimination settings, including binary coherent-state discrimination in the presence of experimental imperfections \cite{dimario2018}.
The present work differs from these existing directions in its focus and construction.
We do not optimize over the prior probabilities of the hypotheses themselves.
Instead, the prior is placed on the nuisance parameter $\theta$ describing an unknown common unitary perturbation.
The resulting least favorable prior identifies which perturbations make the discrimination problem most difficult.
We then use the Bayes-optimal measurement for this prior as a robust collective measurement for $K$-ary state discrimination.
This gives a constructive and computationally accessible method for designing robust measurements under unknown common unitary perturbations.

The main contribution of this paper is not the minimax--least-favorable-prior
correspondence itself, which is a fundamental principle in statistical
decision theory, but its concrete use as a computational and diagnostic
framework for robust multi-hypothesis quantum-state discrimination with
nuisance parameters. In particular, we place the least favorable prior on the
unknown nuisance parameter, rather than on the hypotheses to be discriminated.
This distinction is essential: the label prior describes how often each
candidate state is prepared, whereas the nuisance prior identifies which
external perturbations make the fixed-label discrimination problem most
difficult.

More specifically, our contributions are as follows.
First, we formulate robust $K$-ary quantum-state discrimination under an unknown common unitary perturbation as a finite-grid minimax problem over the nuisance parameter.
Second, we show that, on the finite grid, the least favorable prior over the nuisance parameter can be computed through a semidefinite program, and that a Bayes-optimal measurement for this prior gives a minimax measurement when the saddle-point conditions are satisfied.
Third, we demonstrate the method on binary nonorthogonal qubit and three-state qutrit models, showing that the LFP-based collective measurement raises the worst-case success probability and flattens the success-probability profile relative to both a reference-point optimal measurement and a uniform-prior Bayes measurement.

The proposed framework therefore has two complementary roles. As a measurement
design method, it converts a minimax robust-discrimination problem into a
Bayes measurement problem for a suitably chosen nuisance prior. As a diagnostic
tool, the support of the least favorable prior reveals the parameter regions
that actively limit robust discrimination performance. This diagnostic
interpretation is particularly useful in multi-hypothesis and nonorthogonal
models, where the most difficult nuisance values need not be apparent from
the magnitude of the perturbation alone.

The remainder of the paper is organized as follows.
Section~\ref{sec:formulation} introduces the robust discrimination problem.
Section~\ref{sec:lfp_minimax} proves the relation between least favorable priors and minimax measurements.
Section~\ref{sec:numerical_results} defines the measurement strategies compared in the numerical study and presents the qubit and qutrit examples.
Section~\ref{sec:discussion} discusses the interpretation and limitations of the method.
Section~\ref{sec:conclusion} concludes the paper.

%%%%%%%%%%%%%%%%%%%%%%%%%%%%%%%%%%%%%%%%%%%%%%%%%%%%%%%%%%%%%%%%%%%%%%%%%%
\section{Problem formulation}
\label{sec:formulation}
%%%%%%%%%%%%%%%%%%%%%%%%%%%%%%%%%%%%%%%%%%%%%%%%%%%%%%%%%%%%%%%%%%%%%%%%%%

The optimal discrimination of a finite set of known quantum states is a
well-established problem, and necessary and sufficient conditions for
minimum-error measurements are known~\cite{yuen1975}.

In practical settings, however, the states available to the observer may be
affected by unknown physical perturbations, so that a measurement optimized
for the unperturbed states can become sensitive to such uncertainty.
In this section, we consider quantum-state discrimination when all candidate
states are subject to a common unknown unitary perturbation and formulate the
resulting uncertainty as a nuisance parameter.

Our objective is to construct a measurement whose discrimination performance
is robust against variations in this nuisance parameter.
To this end, we propose a Bayes-optimal measurement associated with a least
favorable prior (LFP) over the nuisance-parameter space.
The central idea is to choose the nuisance prior so that the corresponding
Bayes-optimal measurement is also optimal in the worst-case sense.
To establish this connection rigorously, we first formulate the discrimination
problem as a minimax decision problem, introduce the Bayes risk with respect
to a nuisance prior, and then develop a finite-grid formulation that enables
the LFP and the associated measurement to be computed.

\subsection{State family with an unknown common unitary perturbation}

Let $\Hil$ be a finite-dimensional Hilbert space and let
\begin{equation*}
    \{\rho_a\}_{a=1}^K \subset \cS(\Hil)
\end{equation*}
be the candidate quantum states, where
\begin{equation*}
    \cS(\Hil)
    :=
    \left\{
        \rho : \Hil \to \Hil
        \,\middle|\,
        \rho \succeq 0,\;
        \Tr \rho = 1
    \right\}
\end{equation*}
denotes the set of quantum states on $\Hil$.

A measurement optimized for unperturbed states may become suboptimal in the presence of unknown physical perturbations. We therefore consider quantum-state discrimination in which all candidate states are affected by the same unknown unitary perturbation,
\begin{equation*}
    \rho_a(\theta)
    =
    U(\theta) \rho_a U(\theta)^\dagger,
    \qquad
    a=1,\ldots,K,
    \quad
    \theta \in \Theta.
    \label{eq:perturbed_state}
\end{equation*}
Here $\Theta$ is the nuisance-parameter space.
Throughout the theoretical discussion, $\Theta$ is assumed to be compact
when continuous; in the numerical construction it is replaced by a finite
grid.

In the $n$-copy discrimination problem, the observer receives
\begin{equation*}
    \rho_a^{(n)}(\theta)
    =
    \rho_a(\theta)^{\otimes n}
    \label{eq:ncopy_state}
\end{equation*}
and must infer the label $a$.
The nuisance parameter $\theta$ represents an external condition or apparatus
setting common to all candidate states.
It is not estimated by the observer.

\subsection{Measurements, success probability, and risk}

A $K$-outcome measurement on $\Hil^{\otimes n}$ is a POVM
\begin{equation*}
    M=\{M_a\}_{a=1}^K,\qquad
    M_a\succeq 0,\qquad
    \sum_{a=1}^K M_a=\id .
    \label{eq:povm}
\end{equation*}
When outcome $a$ is obtained, the observer decides that the prepared state was $\rho_a(\theta)$.
Let $q_a>0$ be the prior probability of label $a$, with $\sum_a q_a=1$.
For fixed $\theta$, the discrimination success probability is
\begin{equation*}
    P_{\rm succ}^{(n)}(M,\theta)
    =
    \sum_{a=1}^K q_a
    \Tr\!\left[
        \rho_a(\theta)^{\otimes n} M_a
    \right],
    \label{eq:success_def}
\end{equation*}
and the corresponding risk, i.e., the average error probability, is
\begin{equation}
    R_n(M,\theta)
    =
    1-P_{\rm succ}^{(n)}(M,\theta).
    \label{eq:risk_def}
\end{equation}

Because $\theta$ is unknown, a measurement should be evaluated by the entire profile
\begin{equation*}
    \theta\longmapsto R_n(M,\theta),
\end{equation*}
or equivalently by the success-probability profile.
The worst-case risk of a fixed measurement M is defined as
\begin{equation}
\overline R_n(M)
=
\sup_{\theta\in\Theta} R_n(M,\theta).
\label{eq:worst_case_risk}
\end{equation}
The objective of robust discrimination is to choose a measurement that minimizes this worst-case risk. Accordingly, we define the optimal robust risk by
\begin{equation}
R_n^{\rm rob}
=
\inf_M \overline R_n(M)
=
\inf_M \sup_{\theta\in\Theta} R_n(M,\theta).
\label{eq:minimax_risk_problem}
\end{equation}
A measurement attaining $R_n^{\rm rob}$
is called a minimax measurement. Equivalently, in terms of the success probability,
\begin{equation*}
\sup_M \inf_{\theta\in\Theta}
P_{\rm succ}^{(n)}(M,\theta)
=
1-R_n^{\rm rob}.
\end{equation*}

\subsection{Bayes risk over the nuisance parameter}

Let $w\in\cP(\Theta)$ be a probability distribution over the nuisance parameter.
The Bayes risk with respect to $w$ is
\begin{equation}
    r_n(M,w)
    =
    \int_\Theta R_n(M,\theta)\,w(d\theta).
    \label{eq:bayes_risk}
\end{equation}
This prior is not the prior over the candidate labels; the label prior is already given by $\{q_a\}$.
The prior $w$ describes uncertainty in the nuisance parameter.

Using Eq.~\eqref{eq:risk_def}, we obtain
\begin{align}
    r_n(M,w)
    &=
    1-
    \sum_{a=1}^K q_a
    \Tr\!\left[
        \bar\rho_a^{(n)}(w) M_a
    \right],
    \label{eq:bayes_risk_average_state}
\end{align}
where
\begin{equation*}
    \bar\rho_a^{(n)}(w)
    =
    \int_\Theta
    \rho_a(\theta)^{\otimes n}\,w(d\theta)
    \label{eq:averaged_state}
\end{equation*}
is the nuisance-averaged $n$-copy state.
Thus, for fixed $w$, minimizing the Bayes risk is equivalent to the ordinary Bayes-optimal discrimination problem for the averaged ensemble
\begin{equation*}
    \{\bar\rho_a^{(n)}(w),q_a\}_{a=1}^K .
\end{equation*}
This observation is the basis of our construction:
once an appropriate prior $w$ over the nuisance parameter is chosen, robust discrimination can be reduced to a standard minimum-error discrimination problem for averaged states.

\subsection{Finite-grid formulation}

The nuisance parameter $\theta$ takes values in the continuous parameter space $\Theta$. To numerically search for a least favorable prior over $\Theta$, however, the nuisance-parameter space must be represented in a finite-dimensional form. We therefore discretize $\Theta$ into a finite grid and represent a prior over the nuisance parameter by a probability vector on this grid. This discretization allows both the search for the LFP and the construction of the corresponding Bayes-optimal measurement to be carried out numerically. In what follows, we formulate the minimax discrimination problem on this finite grid.

We approximate $\Theta$ by a finite grid
\begin{equation*}
    \Theta_N=\{\theta_1,\ldots,\theta_N\}.
    \label{eq:theta_grid}
\end{equation*}
A prior over the grid is a probability vector
\begin{equation*}
    w=(w_1,\ldots,w_N)\in\Delta_N,
\end{equation*}
where
\begin{equation*}
    \Delta_N
    :=
    \left\{
        w\in\mathbb{R}^N
        \,\middle|\,
        w_i\ge 0,\quad
        \sum_{i=1}^N w_i=1
    \right\}
\end{equation*}
is the probability simplex.
Let
\begin{equation*}
    \sigma_{a,i}^{(n)}
    =
    \rho_a(\theta_i)^{\otimes n}.
    \label{eq:sigma_ai}
\end{equation*}
Then
\begin{equation}
    \bar\rho_a^{(n)}(w)
    =
    \sum_{i=1}^N w_i \sigma_{a,i}^{(n)}.
    \label{eq:discrete_average_state}
\end{equation}
The grid Bayes risk is
\begin{equation*}
    r_n(M,w)
    =
    \sum_{i=1}^N w_i R_i(M),
    \qquad
    R_i(M):=R_n(M,\theta_i).
    \label{eq:discrete_bayes_risk}
\end{equation*}
The finite-grid minimax risk is
\begin{equation*}
    V_{n,N}^{\rm risk}
    =
    \inf_M \max_{1\le i\le N} R_i(M),
    \label{eq:discrete_minimax_risk}
\end{equation*}
and the corresponding minimax success probability is
\begin{equation*}
    V_{n,N}^{\rm succ}
    =
    \sup_M \min_{1\le i\le N} P_i(M)
    =
    1-V_{n,N}^{\rm risk},
    \label{eq:discrete_minimax_success}
\end{equation*}
where $P_i(M)=P_{\rm succ}^{(n)}(M,\theta_i)$.
In the numerical study, the measurement is optimized on $\Theta_N$ and then checked on a finer evaluation grid.

%%%%%%%%%%%%%%%%%%%%%%%%%%%%%%%%%%%%%%%%%%%%%%%%%%%%%%%%%%%%%%%%%%%%%%%%%%
\section{Least favorable priors and minimax measurements}
\label{sec:lfp_minimax}
%%%%%%%%%%%%%%%%%%%%%%%%%%%%%%%%%%%%%%%%%%%%%%%%%%%%%%%%%%%%%%%%%%%%%%%%%%
We now establish the theoretical basis for constructing a robust measurement through a least favorable prior. We first define the LFP on the finite nuisance-parameter grid and then show, through a minimax theorem, how it is related to the original worst-case discrimination problem. We further derive the saddle-point and equalization properties that will be used to interpret the numerical solutions. Finally, we state the corresponding minimax relation for a continuous nuisance-parameter space.
% We now summarize the relation between least favorable priors and minimax measurements on a finite grid.
% The continuous case follows under standard compactness and continuity assumptions by the same minimax argument.

\subsection{Least favorable prior}

% \textcolor{red}{The minimax theorem establishes the optimal value and the correspondence between LFPs and minimax measurements, but it does not by itself describe the structure of an optimal pair. To interpret the resulting LFP and the success-probability profiles obtained numerically, we next examine the associated saddle-point conditions. In particular, these conditions reveal which nuisance-parameter values are active at the minimax solution and lead to an equalization property on the support of the LFP.
% }

Let $\cM$ be the set of all $K$-outcome POVMs on $\Hil^{\otimes n}$.
For $w\in\Delta_N$, define the Bayes-optimal risk
\begin{equation*}
    \beta_n(w)
    =
    \inf_{M\in\cM} r_n(M,w).
    \label{eq:bayes_value_beta}
\end{equation*}

\begin{definition}[Least favorable prior]
\label{def:lfp}
A prior $w^\star\in\Delta_N$ is called a least favorable prior if
\begin{equation*}
    \beta_n(w^\star)
    =
    \sup_{w\in\Delta_N} \beta_n(w).
    \label{eq:lfp_definition}
\end{equation*}
Equivalently, in the success-probability formulation, $w^\star$ minimizes the Bayes-optimal success probability over priors on the nuisance parameter.
\end{definition}

Thus, in the risk formulation, the LFP is the prior over $\Theta$ that makes the best achievable Bayes risk as large as possible.

\subsection{Minimax theorem}

The definition of an LFP alone does not yet show that optimizing against such a prior solves the original worst-case discrimination problem. We therefore establish the minimax relation between optimization over measurements and optimization over nuisance priors. This relation provides the theoretical justification for constructing a robust measurement as a Bayes-optimal measurement for an LFP.

We use Sion's minimax theorem \cite{sion1958}.
Applied to the compact convex set $\cM$, the simplex $\Delta_N$, and the affine function $r_n(M,w)$, it yields
\begin{equation}
    \inf_{M\in\cM}\sup_{w\in\Delta_N} r_n(M,w)
    =
    \sup_{w\in\Delta_N}\inf_{M\in\cM} r_n(M,w).
    \label{eq:sion_applied}
\end{equation}
For any fixed $M$,
\begin{equation}
    \sup_{w\in\Delta_N} r_n(M,w)
    =
    \max_{1\le i\le N} R_i(M),
    \label{eq:prior_max_equals_worst}
\end{equation}
because $r_n(M,w)$ is a convex combination of the finite set $\{R_i(M)\}_{i=1}^N$, and the maximum is attained by a point mass on an index achieving the largest risk.

\begin{theorem}[LFP--minimax correspondence]
\label{thm:lfp_minimax}
On the finite grid $\Theta_N$, the following statements hold.

\begin{enumerate}
\item The minimax identity
\begin{equation}
    \inf_{M\in\cM}\max_{1\le i\le N} R_i(M)
    =
    \sup_{w\in\Delta_N}\inf_{M\in\cM} r_n(M,w)
    \label{eq:minimax_lfp_equality}
\end{equation}
holds.

\item Let $w^\star$ be an LFP and let $M^\star$ be a minimax measurement.
Then $M^\star$ is Bayes optimal for $w^\star$:
\begin{equation}
    M^\star
    \in
    \arg\min_{M\in\cM} r_n(M,w^\star).
    \label{eq:minimax_is_bayes_lfp}
\end{equation}

\item Conversely, if $\widetilde M$ is Bayes optimal for $w^\star$ and satisfies
\begin{equation}
    \max_{1\le i\le N} R_i(\widetilde M)
    =
    r_n(\widetilde M,w^\star),
    \label{eq:bayes_to_minimax_condition}
\end{equation}
then $\widetilde M$ is a minimax measurement.
\end{enumerate}
\end{theorem}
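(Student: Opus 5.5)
We prove the three items in order, using Sion's identity \eqref{eq:sion_applied} together with \eqref{eq:prior_max_equals_worst}.

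\emph{Item 1.} First verify the hypotheses of Sion's theorem. The set $\cM$ is convex and compact, since it is a closed and bounded subset of the finite-dimensional space of $K$-tuples of Hermitian operators on $\Hil^{\otimes n}$. The simplex $\Delta_N$ is convex and compact. The map $r_n(M,w)=\sum_i w_i R_i(M)$ is affine and continuous in each argument separately, because each $R_i(M)$ is affine in $M$. Hence \eqref{eq:sion_applied} holds. Substituting \eqref{eq:prior_max_equals_worst} into its left-hand side gives \eqref{eq:minimax_lfp_equality}. By compactness and continuity, the infimum over $M$ is attained, so a minimax measurement exists. The function $\beta_n$ is an infimum of affine functions of $w$, hence concave and upper semicontinuous on the compact set $\Delta_N$, so the supremum is attained as well. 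An LFP therefore exists. Denote the common value by $V$.

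\emph{Item 2.} Let $w^\star$ be an LFP and $M^\star$ a minimax measurement. We chain the inequalities
\begin{equation*}
V=\beta_n(w^\star)\le r_n(M^\star,w^\star)\le \max_{i} R_i(M^\star)=V.
\end{equation*}
The first inequality is the definition of $\beta_n$ as an infimum. The second holds because $r_n(M^\star,w^\star)$ is a convex combination of the values $R_i(M^\star)$. The two equalities come from item 1 together with the definitions of LFP and minimax measurement. All terms are therefore equal, so $r_n(M^\star,w^\star)=\beta_n(w^\star)$, which is exactly \eqref{eq:minimax_is_bayes_lfp}.

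As a by-product, the equality $r_n(M^\star,w^\star)=\max_i R_i(M^\star)$ forces $R_i(M^\star)=V$ for every $i$ with $w^\star_i>0$. This is the equalization property on the support of the LFP.

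\emph{Item 3.} Let $\widetilde M$ be Bayes optimal for $w^\star$ and satisfy \eqref{eq:bayes_to_minimax_condition}. Then
\begin{equation*}
\max_i R_i(\widetilde M)=r_n(\widetilde M,w^\star)=\beta_n(w^\star)=V=\inf_{M\in\cM}\max_i R_i(M).
\end{equation*}
Hence $\widetilde M$ attains the minimax value and is a minimax measurement.

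The only step that needs care is the verification of Sion's hypotheses in item 1, namely the compactness of $\cM$ and the attainment of the supremum and infimum. The remaining items are short sandwich arguments built on item 1.
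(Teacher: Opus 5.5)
Your proof is correct and follows essentially the same route as the paper: item 1 combines Sion's theorem with Eq.~\eqref{eq:prior_max_equals_worst}, and items 2 and 3 use the same sandwich $\beta_n(w^\star)\le r_n(M^\star,w^\star)\le\max_i R_i(M^\star)$ with $\beta_n(w^\star)=V$. The paper only asserts compactness of $\mathcal{M}$ and takes the minimax measurement and LFP as given, whereas you also verify Sion's hypotheses and prove that both exist, using compactness and upper semicontinuity of $\beta_n$.
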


\begin{proof}
Eq.~\eqref{eq:minimax_lfp_equality} follows from Eq.~\eqref{eq:sion_applied} and Eq.~\eqref{eq:prior_max_equals_worst}.
Let
\begin{equation*}
    V=\inf_{M\in\cM}\max_i R_i(M).
\end{equation*}
By the minimax identity and the definition of $w^\star$,
\begin{equation*}
    \inf_{M\in\cM} r_n(M,w^\star)=V.
\end{equation*}
If $M^\star$ is minimax, then $\max_i R_i(M^\star)=V$.
Since $r_n(M^\star,w^\star)$ is a convex combination of $\{R_i(M^\star)\}_i$, we have $r_n(M^\star,w^\star)\le V$.
On the other hand, because $V$ is the infimum of $r_n(M,w^\star)$ over all measurements, $r_n(M^\star,w^\star)\ge V$.
Thus equality holds and $M^\star$ is Bayes optimal for $w^\star$.
The converse follows immediately from Eq.~\eqref{eq:bayes_to_minimax_condition} and the equality $r_n(\widetilde M,w^\star)=V$.
\end{proof}

\subsection{Saddle-point condition and equalization}

The minimax theorem establishes the optimal value and the correspondence between LFPs and minimax measurements, but it does not by itself describe the structure of an optimal pair. To interpret the resulting LFP and the success-probability profiles obtained numerically, we next examine the associated saddle-point conditions. In particular, these conditions reveal which nuisance-parameter values are active at the minimax solution and lead to an equalization property on the support of the LFP.

\begin{proposition}[Saddle-point condition]
\label{prop:saddle_point}
If $M^\star$ is a minimax measurement and $w^\star$ is an LFP, then
\begin{equation}
    r_n(M^\star,w)
    \le
    r_n(M^\star,w^\star)
    \le
    r_n(M,w^\star)
    \label{eq:saddle_condition}
\end{equation}
for all $M\in\cM$ and $w\in\Delta_N$.
Moreover, $r_n(M^\star,w^\star)=V_{n,N}^{\rm risk}$.
\end{proposition}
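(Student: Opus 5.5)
The plan is to derive the saddle-point inequalities from Theorem~\ref{thm:lfp_minimax} together with Eq.~\eqref{eq:prior_max_equals_worst}. Let
\begin{equation*}
V=V_{n,N}^{\rm risk}=\inf_{M\in\cM}\max_i R_i(M).
\end{equation*}
First I would record the two facts already obtained in the proof of Theorem~\ref{thm:lfp_minimax}. The first is
\begin{equation*}
\inf_{M\in\cM} r_n(M,w^\star)=\beta_n(w^\star)=\sup_{w\in\Delta_N}\beta_n(w)=V,
\end{equation*}
which follows from the minimax identity \eqref{eq:minimax_lfp_equality} and the definition of an LFP. The second is $r_n(M^\star,w^\star)=V$, which is exactly the content of item 2 of Theorem~\ref{thm:lfp_minimax}: a minimax measurement is Bayes optimal for $w^\star$, and the Bayes value equals $V$. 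This already gives the final claim $r_n(M^\star,w^\star)=V_{n,N}^{\rm risk}$.

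For the right-hand inequality in \eqref{eq:saddle_condition}, I would fix an arbitrary $M\in\cM$. Then
\begin{equation*}
r_n(M,w^\star)\ge \inf_{M'\in\cM} r_n(M',w^\star)=V=r_n(M^\star,w^\star).
\end{equation*}

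For the left-hand inequality, I would fix an arbitrary $w\in\Delta_N$. Since $r_n(M^\star,w)=\sum_i w_iR_i(M^\star)$ is a convex combination, Eq.~\eqref{eq:prior_max_equals_worst} gives
\begin{equation*}
r_n(M^\star,w)\le \max_i R_i(M^\star)=V.
\end{equation*}
Here the last equality is the definition of $M^\star$ being minimax, i.e.\ attaining the infimum. Combining this with $V=r_n(M^\star,w^\star)$ gives the left inequality.

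The only point that needs care is the hypotheses on existence: that a minimax $M^\star$ attains the infimum, and that the supremum defining $w^\star$ is attained. Both are assumed in the statement, so nothing more is needed. If one wanted to justify them, $\cM$ is compact and $M\mapsto\max_i R_i(M)$ is continuous, so $M^\star$ exists. Likewise, $\beta_n$ is an infimum of affine functions, hence concave and upper semicontinuous on the compact simplex $\Delta_N$, so $w^\star$ exists. I do not expect any genuine obstacle; the argument is a direct chain of the inequalities above.
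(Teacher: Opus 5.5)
Your proposal is correct and follows essentially the same route as the paper: both establish $r_n(M^\star,w^\star)=V_{n,N}^{\rm risk}$ from $\max_i R_i(M^\star)=V$ and $\inf_{M} r_n(M,w^\star)=V$, and then read off the two inequalities. The only cosmetic difference is that you get the equality by citing item~2 of Theorem~\ref{thm:lfp_minimax}, whereas the paper re-derives it by the explicit sandwich $V\le r_n(M^\star,w^\star)\le V$.
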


\begin{proof}
Let
\[
V:=V^{\rm risk}_{n,N}
=
\inf_{M\in\mathcal M}\max_i R_i(M).
\]
Since $M^\star$ is minimax,
\[
\max_i R_i(M^\star)=V.
\]
Therefore, for any $w\in\Delta_N$,
\[
r_n(M^\star,w)
=
\sum_i w_i R_i(M^\star)
\le
\max_i R_i(M^\star)
=
V.
\]
In particular,
\[
r_n(M^\star,w^\star)\le V.
\]

On the other hand, since $w^\star$ is a least favorable prior,
Theorem~\ref{thm:lfp_minimax} gives
\[
\inf_{M\in\mathcal M}r_n(M,w^\star)=V.
\]
Hence
\[
r_n(M^\star,w^\star)
\ge
\inf_{M\in\mathcal M}r_n(M,w^\star)
=
V.
\]
Combining the two inequalities yields
\[
r_n(M^\star,w^\star)=V.
\]

Consequently, for any $w\in\Delta_N$,
\[
r_n(M^\star,w)
\le
V
=
r_n(M^\star,w^\star),
\]
and for any $M\in\mathcal M$,
\[
r_n(M^\star,w^\star)
=
V
\le
r_n(M,w^\star).
\]
Therefore,
\[
r_n(M^\star,w)
\le
r_n(M^\star,w^\star)
\le
r_n(M,w^\star),
\]
which proves the saddle-point condition.
\end{proof}

A minimax measurement $M^\star$ and an LFP $w^\star$ satisfying
Eq.~\eqref{eq:saddle_condition} are said to form a saddle point.

\begin{proposition}[Equalization on the LFP support]
\label{prop:equalizer_property}
Let $M^\star$ and $w^\star$ form a saddle point.
Then
\begin{equation*}
    R_i(M^\star)\le V_{n,N}^{\rm risk}
    \qquad
    \forall i,
\end{equation*}
and for every $i$ with $w_i^\star>0$,
\begin{equation}
    R_i(M^\star)=V_{n,N}^{\rm risk}.
    \label{eq:equalizer_support}
\end{equation}
Equivalently,
\begin{equation*}
    P_i(M^\star)=V_{n,N}^{\rm succ}
    \qquad
    (w_i^\star>0).
\end{equation*}
\end{proposition}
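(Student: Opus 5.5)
The plan is to take the value of the saddle from Proposition~\ref{prop:saddle_point} and then read off both claims from the left inequality of Eq.~\eqref{eq:saddle_condition}, choosing particular priors $w$. One caveat matters here. The statement only assumes that $(M^\star,w^\star)$ form a saddle point in the sense of Eq.~\eqref{eq:saddle_condition}, not that $M^\star$ is minimax. So the first step is to show that $r_n(M^\star,w^\star)=V_{n,N}^{\rm risk}$ from the saddle inequalities alone.

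Let $V=V_{n,N}^{\rm risk}$. The right inequality gives
\[
r_n(M^\star,w^\star)=\inf_{M\in\cM} r_n(M,w^\star)\le \sup_{w\in\Delta_N}\inf_{M\in\cM} r_n(M,w)=V,
\]
where the last equality is Eq.~\eqref{eq:minimax_lfp_equality} of Theorem~\ref{thm:lfp_minimax}. The left inequality, combined with Eq.~\eqref{eq:prior_max_equals_worst}, gives
\[
\max_i R_i(M^\star)=\sup_{w\in\Delta_N} r_n(M^\star,w)\le r_n(M^\star,w^\star).
\]
Since $V=\inf_M \max_i R_i(M)\le \max_i R_i(M^\star)$, it follows that $r_n(M^\star,w^\star)=V=\max_i R_i(M^\star)$. (When $M^\star$ is already known to be minimax and $w^\star$ an LFP, this is exactly Proposition~\ref{prop:saddle_point}.)

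The first claim follows by taking $w=e_i$, the point mass at $\theta_i$, in the left saddle inequality. This gives $R_i(M^\star)=r_n(M^\star,e_i)\le V$ for every $i$.

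For equalization, expand $V=r_n(M^\star,w^\star)=\sum_i w_i^\star R_i(M^\star)$. Then
\[
\sum_i w_i^\star\bigl(V-R_i(M^\star)\bigr)=0 .
\]
Each term in this sum is nonnegative, because $w_i^\star\ge 0$ and $R_i(M^\star)\le V$. Hence every term vanishes, so $R_i(M^\star)=V$ whenever $w_i^\star>0$. The success-probability form follows from $P_i=1-R_i$ and $V_{n,N}^{\rm succ}=1-V_{n,N}^{\rm risk}$.

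The only step that needs any care is identifying the saddle value with $V_{n,N}^{\rm risk}$ in the first paragraph. Once that is in place, the rest is the elementary fact that a convex combination equals the maximum of its terms only if every term with positive weight attains that maximum.
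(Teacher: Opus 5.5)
Your proof is correct and its core matches the paper's: you bound $R_i(M^\star)\le V_{n,N}^{\rm risk}$ for all $i$, then use that $\sum_i w_i^\star[V_{n,N}^{\rm risk}-R_i(M^\star)]=0$ has nonnegative terms to force equality wherever $w_i^\star>0$. The only difference is your preliminary derivation of $r_n(M^\star,w^\star)=V_{n,N}^{\rm risk}$ from the saddle inequalities alone; the paper's definition of ``form a saddle point'' already requires $M^\star$ to be minimax and $w^\star$ an LFP, so Proposition~\ref{prop:saddle_point} supplies this value directly, but your extra step is valid and shows the statement also holds under the weaker reading.
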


\begin{proof}
Since $M^\star$ is minimax, $\max_i R_i(M^\star)=V_{n,N}^{\rm risk}$, hence $R_i(M^\star)\le V_{n,N}^{\rm risk}$ for all $i$.
Because $r_n(M^\star,w^\star)=V_{n,N}^{\rm risk}$,
\begin{equation*}
    \sum_i w_i^\star
    \left[
        V_{n,N}^{\rm risk}-R_i(M^\star)
    \right]
    =
    0.
\end{equation*}
Each term is nonnegative.
Thus every index with positive weight must satisfy Eq.~\eqref{eq:equalizer_support}.
\end{proof}

This equalization property gives the LFP a direct physical meaning.
The support of $w^\star$ consists of nuisance-parameter values that are active worst cases for the minimax measurement.

\subsection{Continuous parameter spaces}

The finite-grid formulation is convenient for both the theoretical characterization above and the numerical construction developed in the next section. In the underlying physical problem, however, the nuisance parameter is generally continuous. It is therefore important to clarify that the finite-grid formulation is not a separate decision problem, but a discretization of an analogous minimax problem on the continuous parameter space. We briefly state this continuous counterpart below.

If $\Theta$ is compact and $\theta\mapsto\rho_a(\theta)$ is continuous for each $a$, then $\theta\mapsto R_n(M,\theta)$ is continuous for every fixed $M$.
Let $\cP(\Theta)$ denote the probability measures on $\Theta$ with the weak topology.
Under the same convexity and compactness assumptions, Sion's theorem gives
\begin{equation}
    \inf_M\sup_{\theta\in\Theta} R_n(M,\theta)
    =
    \sup_{w\in\cP(\Theta)}
    \inf_M
    \int_\Theta R_n(M,\theta)\,w(d\theta).
    \label{eq:continuous_minimax_lfp}
\end{equation}
Thus the finite-grid construction used below should be viewed as a computable discretization of this continuous minimax problem.

\section{Numerical results}
\label{sec:numerical_results}
%%%%%%%%%%%%%%%%%%%%%%%%%%%%%%%%%%%%%%%%%%%%%%%%%%%%%%%%%%%%%%%%%%%%%%%%%%

The preceding sections establish the theoretical basis of the LFP-based minimax approach. The SDP implementation used in the numerical calculations is detailed in Appendix \ref{sec:numerical_construction}. It remains to examine how the resulting measurement performs in explicit quantum-state discrimination problems and how much robustness is gained relative to natural benchmark strategies. In this section, we therefore evaluate the proposed method numerically by comparing its success-probability profile and worst-case performance with those of a reference-point optimal measurement and a uniform-prior Bayes measurement. We first introduce the measurement strategies used for comparison and then consider two examples of increasing complexity: a binary nonorthogonal qubit model illustrating the basic effect of the LFP-based construction, followed by a three-state nonorthogonal qutrit model demonstrating its applicability beyond binary discrimination.

\subsection{Measurement strategies}
\label{sec:measurement_strategies}
%%%%%%%%%%%%%%%%%%%%%%%%%%%%%%%%%%%%%%%%%%%%%%%%%%%%%%%%%%%%%%%%%%%%%%%%%%

We compare four measurement strategies.
All of them are fixed measurements on $\Hil^{\otimes n}$ and are evaluated as functions of the unknown parameter $\theta$.

\subsubsection{Reference-point optimal measurement}
\label{subsec:reference_measurement}

Let $\theta_{\rm ref}\in\Theta$ be a nominal reference point.
The reference-point optimal measurement is
\begin{equation*}
    M_{\rm ref}^{(n)}
    \in
    \arg\max_M
    P_{\rm succ}^{(n)}(M,\theta_{\rm ref}).
    \label{eq:reference_measurement_success}
\end{equation*}
This measurement is optimal if the true parameter equals $\theta_{\rm ref}$.
However, if $\theta$ differs substantially from $\theta_{\rm ref}$, the measurement may be poorly aligned with the actual perturbed ensemble, and the success probability can deteriorate.
It is therefore a useful baseline for assessing robustness.

\subsubsection{Uniform-prior Bayes measurement}
\label{subsec:uniform_measurement}

The uniform-prior Bayes measurement uses the uniform distribution on the optimization grid,
\begin{equation*}
    w_{{\rm unif},i}=\frac{1}{N}.
\end{equation*}
It is defined by
\begin{equation*}
    M_{\rm unif}^{(n)}
    \in
    \arg\min_M r_n(M,w_{\rm unif}).
    \label{eq:uniform_measurement}
\end{equation*}
This strategy accounts for the whole parameter grid in an average sense.
It is often more stable than the reference-point measurement, but it does not specifically emphasize the parameter values that dominate the worst-case performance.

\subsubsection{LFP-based robust measurement (global collective strategy)}
\label{subsec:global_lfp_measurement}

The global LFP-based measurement is the Bayes-optimal collective
measurement for a least favorable prior,
\begin{equation*}
    M_{\rm LFP}^{\rm global}
    \in
    \arg\min_M r_n(M,w^\star).
\end{equation*}
The global LFP-based measurement $M_{\rm LFP}^{\rm global}$
and the corresponding least favorable prior $w^\star$
are obtained as the primal and dual optimal solutions, respectively, of the minimax SDP described in Appendix~\ref{sec:numerical_construction}. 
The measurement is therefore minimax on the finite optimization grid by construction.
Here the optimization is performed over arbitrary POVMs on
$\Hil^{\otimes n}$.
On the finite grid, when the saddle-point condition is satisfied,
this measurement achieves the minimax value and equalizes the
success probability on the support of the LFP.

\subsubsection{Local-product comparison strategy}
\label{subsec:local_lfp_measurement}

To assess the advantage of collective measurements, we also consider
a local-product strategy constructed from the same least favorable
prior $w^\star$.
We first determine a single-copy Bayes-optimal POVM
$F^\star=\{F_b^\star\}$ for the nuisance-averaged single-copy ensemble,
\begin{equation*}
    F^\star
    \in
    \arg\max_{\{F_b\}}
    \sum_{a=1}^{K} q_a
    \Tr\!\left[
        \bar{\rho}^{(1)}_a(w^\star)F_a
    \right],
\end{equation*}
where
\begin{equation*}
    \bar{\rho}^{(1)}_a(w^\star)
    =
    \sum_{i=1}^{N}w_i^\star \rho_a(\theta_i).
\end{equation*}

The POVM $F^\star$ is then applied independently to each of the
$n$ copies.
For the resulting outcome string
$\bm b=(b_1,\ldots,b_n)$, the score assigned to label $a$ is
\begin{equation*}
    A_a(\bm b;w^\star,F^\star)
    =
    q_a
    \sum_{i=1}^{N} w_i^\star
    \prod_{\ell=1}^{n}
    \Tr\!\left[
        \rho_a(\theta_i)F_{b_\ell}^\star
    \right].
\end{equation*}
The final decision is made according to the Bayes rule
\begin{equation*}
    \delta(\bm b)
    \in
    \arg\max_a
    A_a(\bm b;w^\star,F^\star).
\end{equation*}

This procedure defines the local-product measurement
$M_{\rm LFP}^{\rm local}$ used in the numerical comparisons.
Unlike $M_{\rm LFP}^{\rm global}$, it does not allow a collective
quantum measurement across the $n$ copies; only the final classical
decision uses the complete outcome string.
The local-product strategy is included as a benchmark and is not
intended to represent an optimization over all adaptive local
measurement protocols.

\subsection{Binary nonorthogonal qubit model}
\label{subsec:binary_nonorthogonal_qubit_model}

We first consider a simple binary qubit model in order to illustrate the basic role of the LFP-based minimax measurement.
Let ${|0\rangle,|1\rangle}$ be the computational basis of $\mathbb{C}^2$, and define two nonorthogonal pure states by
\begin{equation*}
|\psi_0\rangle = |0\rangle,
\qquad
|\psi_1\rangle = \cos\alpha|0\rangle+\sin\alpha|1\rangle .
\end{equation*}
The corresponding density operators are
\begin{equation*}
\rho_a = |\psi_a\rangle\langle\psi_a|,
\qquad a=0,1 .
\end{equation*}
Both candidate states are affected by the same unknown rotation about the $y$ axis,
\begin{equation*}
\rho_a(\theta)
=
U(\theta)\rho_a U(\theta)^\dagger,
\qquad
U(\theta)
=
\exp\left(
-\frac{i\theta\sigma_y}{2}
\right),
\end{equation*}
where $\theta\in[-\theta_{\max},\theta_{\max}]$.

In this numerical experiment, we set $\alpha=\pi/4$, $\theta_{\max}=\pi/3$ and  equal label priors, $q_0 = q_1 = 1/2$.

The overlap between the two unperturbed states is
\begin{equation*}
|\langle\psi_0|\psi_1\rangle|^2
=
\cos^2\alpha .
\end{equation*}
For $\alpha=\pi/4$, this overlap is equal to $1/2$.
Since the same unitary perturbation is applied to both candidate states, the overlap
\begin{equation*}
\Tr[\rho_0(\theta)\rho_1(\theta)]
=
\Tr[\rho_0\rho_1]
\end{equation*}
is independent of $\theta$.
Therefore, if $\theta$ were known, the intrinsic difficulty of the binary discrimination problem would be the same for all $\theta$.
The robust problem is nevertheless nontrivial because the measurement must be fixed before the value of the nuisance parameter is known.
A measurement optimized at one reference value of $\theta$ can be poorly aligned with the rotated states at another value of $\theta$.

We consider $n=3$ copies and compare the four measurement strategies
introduced in Sec.~\ref{sec:measurement_strategies}.
The reference-point measurement is optimized at
$\theta_{\rm ref}=0$.
The optimization grid consists of $N=41$ points, and the resulting
measurements are evaluated on a finer grid with
$N_{\rm eval}=401$ points.

Fig.~\ref{fig:qubit_nonorthogonal_profiles} shows the success-probability profiles of the four measurements.
The reference-point optimal measurement performs well near $\theta=0$, but its performance rapidly deteriorates toward the edge of the parameter interval.
This illustrates that a nominally optimized measurement can be highly sensitive to an unknown common unitary perturbation, even though the intrinsic overlap between the candidate states is unchanged.
The uniform-prior Bayes measurement improves robustness by taking the entire parameter range into account.
However, because it weights all parameter values uniformly, it does not necessarily focus on the parameter values that determine the minimax performance.
In contrast, the global LFP-based minimax measurement raises the lowest part of the success-probability profile and gives the largest worst-case success probability among the four strategies.

\begin{figure}[t]
\centering
\includegraphics[width=0.85\linewidth]{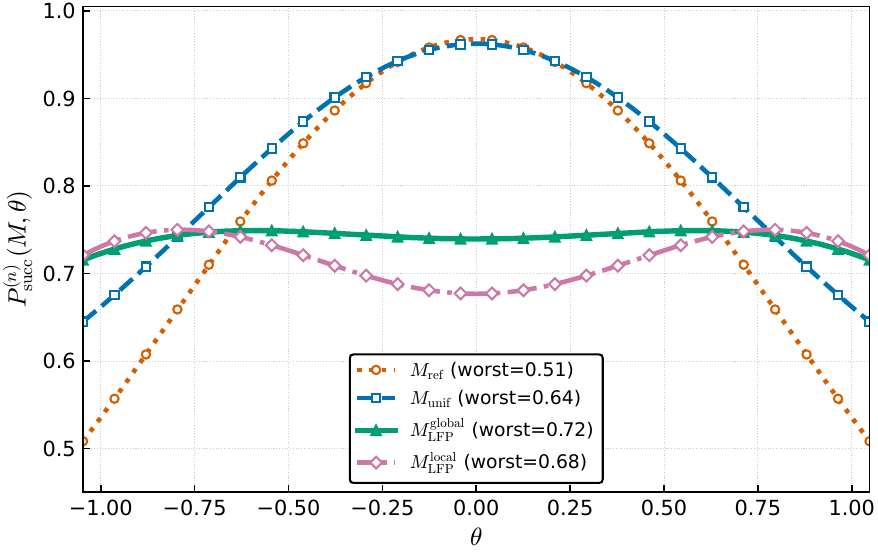}
\caption{
Success-probability profiles for the binary nonorthogonal qubit model.
}
\label{fig:qubit_nonorthogonal_profiles}
\end{figure}

Table~\ref{tab:qubit_worst_case_success} summarizes the worst-case success probabilities.
The reference-point optimal measurement has worst-case success probability $0.51$ on the evaluation grid, showing that a measurement optimized at a single nominal value can become almost uninformative at the edge of the perturbation range.
The uniform-prior Bayes measurement improves the worst-case value
to $0.64$.
The local-product LFP-based strategy further improves the worst-case
success probability to $0.68$, while the global collective LFP-based
measurement achieves $0.72$.
Thus, in this simple nonorthogonal qubit benchmark, the global
LFP-based construction improves the worst-case success probability
by approximately $0.21$ compared with $M_{\rm ref}$ and by
approximately $0.08$ compared with $M_{\rm unif}$.

\begin{table}[t]
\centering
\caption{
Worst-case success probabilities for the binary nonorthogonal qubit model.
}
\label{tab:qubit_worst_case_success}
\begin{tabular}{lccc}
\toprule
Measurement & Optimization grid & Evaluation grid & Worst $\theta$ \\
\midrule
$M_{\rm ref}$                    & 0.51 & 0.51 & $\pm\theta_{\max}$ \\
$M_{\rm unif}$                   & 0.64 & 0.64 & $\pm\theta_{\max}$ \\
$M_{\rm LFP}^{\rm local}$        & 0.68 & 0.68 & $0$ \\
$M_{\rm LFP}^{\rm global}$       & 0.72 & 0.72 & $\pm\theta_{\max}$ \\
\bottomrule
\end{tabular}
\end{table}

This example demonstrates the central purpose of the LFP-based minimax approach.
Even though the pairwise overlap of the two candidate states is invariant under the common unitary perturbation, the performance of a fixed measurement depends strongly on the unknown value of $\theta$.
The LFP-based measurement explicitly targets the nuisance-parameter values that limit the worst-case performance and therefore provides a constructive way to design a robust measurement.
In the following qutrit example, we show that the same idea remains effective in a genuinely three-state, nonorthogonal, and asymmetric model.

\subsection{Three-state qutrit model}
\label{subsec:qutrit_three_state_model}

We next apply the proposed method to a three-state discrimination problem on $\Hil=\mathbb{C}^3$.
Let ${|0\rangle,|1\rangle,|2\rangle}$ be an orthonormal basis and consider three pure states
\begin{equation*}
\rho_a=|\psi_a\rangle\langle\psi_a|,
\qquad a=0,1,2 .
\end{equation*}
The states are defined by
\begin{align}
|\psi_0\rangle
&=
\sqrt{1-r}|0\rangle
+
\sqrt r|1\rangle ,
\label{eq:qutrit_psi0_specific}\\
|\psi_1\rangle
&=
\sqrt{1-r}|0\rangle
+
\sqrt r\left(
\cos\eta_{\rm asym}|1\rangle
+
\sin\eta_{\rm asym}|2\rangle
\right),
\label{eq:qutrit_psi1_specific}\\
|\psi_2\rangle
&=
\sqrt{1-r}|0\rangle
+
\sqrt r\left(
\cos\eta_{\rm asym}|1\rangle
-
\sin\eta_{\rm asym}|2\rangle
\right).
\label{eq:qutrit_psi2_specific}
\end{align}
In the numerical experiment, we set $r=0.35$ and $\eta_{\rm asym}=0.45\pi$.

Since the states are pure, their pairwise overlaps are given by
\begin{equation*}
\Tr[\rho_a\rho_b]
=
|\langle\psi_a|\psi_b\rangle|^2 .
\end{equation*}
For the present model, this gives
\begin{align}
\Tr[\rho_0\rho_1]
&=
\Tr[\rho_0\rho_2]
=
\left[
(1-r)+r\cos\eta_{\rm asym}
\right]^2,
\label{eq:qutrit_overlap_01_02}\\
\Tr[\rho_1\rho_2]
&=
\left[
(1-r)+r\cos(2\eta_{\rm asym})
\right]^2 .
\label{eq:qutrit_overlap_12}
\end{align}
With $r=0.35$ and $\eta_{\rm asym}=0.45\pi$, these values are
$\Tr[\rho_0\rho_1]=\Tr[\rho_0\rho_2]\simeq 0.5$ and
$\Tr[\rho_1\rho_2]\simeq 0.1$.
Thus $\rho_0$ has a relatively large overlap with both $\rho_1$ and $\rho_2$, while $\rho_1$ and $\rho_2$ are better separated.
This asymmetric nonorthogonal structure makes the model a useful test bed beyond symmetric binary or qubit examples.

The common unitary perturbation is defined by
\begin{equation*}
U(\theta)=\exp[-i\phi(\theta)H],
\qquad
\theta\in[-\beta,\beta],
\end{equation*}
where $\beta=1.0$, $\phi_{\max}=2.0$, and
\begin{equation*}
\phi(\theta)
=
\phi_{\max}
\left(\frac{\theta}{\beta}\right)^3 .
\label{eq:qutrit_phi_specific}
\end{equation*}
The Hamiltonian is
\begin{equation*}
H
=
\frac{H_0}{||H_0||_{\rm op}},
\qquad
H_0 =
\begin{pmatrix}
0 & 1 & 0.7\\
1 & 0 & 1\\
0.7 & 1 & 0
\end{pmatrix}.
\label{eq:qutrit_hamiltonian_specific}
\end{equation*}
where $||H_0||_{\rm op}$ denotes the operator norm of $H_0$.
Since $H_0$ is Hermitian, this norm is equal to the largest absolute eigenvalue,
$||H_0||_{\rm op}=\max_j|\lambda_j(H_0)|$.
The normalization factor is therefore chosen so that the largest absolute eigenvalue of $H$ is one.

This Hamiltonian mixes all three levels nontrivially and is not reducible to a simple two-level rotation.
The use of the nonlinear phase function $\phi(\theta)$ further makes the parameter dependence nonuniform over the nuisance-parameter range.

We consider $n=3$ copies and use the uniform label prior $q_a=1/3$.
For a POVM $M=\{M_0,M_1,M_2\}$ on $\Hil^{\otimes n}$, the success probability is
\begin{equation*}
P_{\rm succ}^{(n)}(M,\theta)
=
\frac{1}{3}
\sum_{a=0}^2
\Tr \left[
\rho_a(\theta)^{\otimes n}M_a
\right].
\label{eq:qutrit_success_probability_specific}
\end{equation*}
The optimization grid consists of $N=31$ points, while the success-probability profiles are evaluated on a finer grid with $N_{\rm eval}=1001$ points.
The finite-grid minimax problem is solved directly as a semidefinite program.
The SDP solver is SCS with tolerance $10^{-5}$ and maximum iteration number $5000$.

We compare the four measurement strategies introduced in
Sec.~\ref{sec:measurement_strategies}.
The reference-point measurement is optimized at
$\theta_{\rm ref}=0$.
Fig.~\ref{fig:qutrit_success_profiles} compares the success-probability profiles of the four measurements.

The reference-point optimal measurement $M_{\rm ref}$ achieves a high Bayes value of $0.96$ at $\theta_{\rm ref}=0$.
However, its worst-case success probability on the evaluation grid drops to $0.47$.
This strong degradation shows that a measurement optimized at a single reference point can be highly sensitive to a common unitary perturbation.
Near $\theta=0$, the measurement is well matched to the nominal ensemble, whereas near the endpoints the perturbed states are substantially rotated relative to the fixed measurement.

The uniform-prior Bayes measurement $M_{\rm unif}$ improves the worst-case success probability to $0.79$ on the evaluation grid.
This improvement is expected because the measurement is optimized for the average ensemble over the entire parameter grid.
Nevertheless, the uniform prior treats easy and hard parameter regions equally.
As a result, it does not fully target the parameter values that determine the minimax performance.

The local-product LFP-based strategy achieves a worst-case success
probability of $0.83$, while the global collective LFP-based
measurement further improves it to $0.86$.
Thus, the global LFP-based measurement improves the worst-case
success probability by approximately $0.39$ compared with
$M_{\rm ref}$ and by approximately $0.07$ compared with
$M_{\rm unif}$.
The agreement between the optimization and evaluation grids
indicates that, for this discretization, no significant deterioration
occurs between grid points.

The success-probability profile of $M_{\rm LFP}$ is visibly flatter than those of the other strategies.
This behavior reflects the equalization property of minimax solutions.
The LFP assigns weight to parameter values that limit the worst-case performance, and the corresponding Bayes-optimal measurement raises the valleys of the success-probability profile rather than maximizing performance at a single reference point.
Therefore, this qutrit example demonstrates that the LFP-based construction gives a robust measurement even in a genuinely three-state, nonorthogonal, and asymmetric model.

\begin{figure}[t]
\centering
\includegraphics[width=0.85\linewidth]{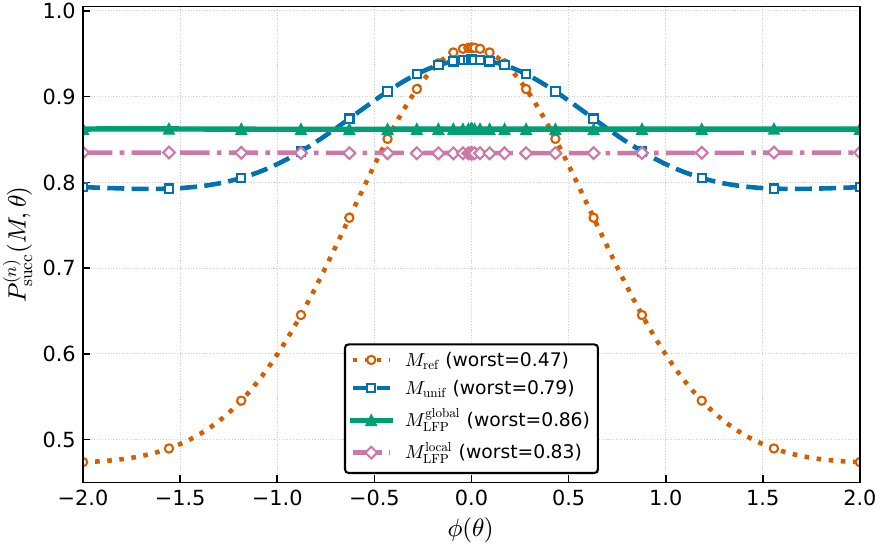}
\caption{
Success-probability profiles for the three-state qutrit model.
}
\label{fig:qutrit_success_profiles}
\end{figure}

Fig.~\ref{fig:qutrit_lfp} shows the least favorable prior
$w^\star$ obtained on the optimization grid.
The LFP is strongly nonuniform and assigns appreciable weight not only
near the boundaries of the nuisance-parameter interval but also to
several interior grid points.
This structure shows that the least favorable nuisance distribution is
not determined solely by the magnitude of the unitary perturbation.

Together with Fig.~\ref{fig:qutrit_success_profiles}, this result also
illustrates the equalizing behavior of the minimax construction.
Although the LFP itself is highly nonuniform, the corresponding global
LFP-based measurement produces an almost flat success-probability
profile over the nuisance-parameter range.
Thus, the LFP provides a nonuniform nuisance distribution whose
Bayes-optimal collective measurement realizes robust, nearly equalized
discrimination performance.

\begin{figure}[t]
\centering
\includegraphics[width=0.85\linewidth]
{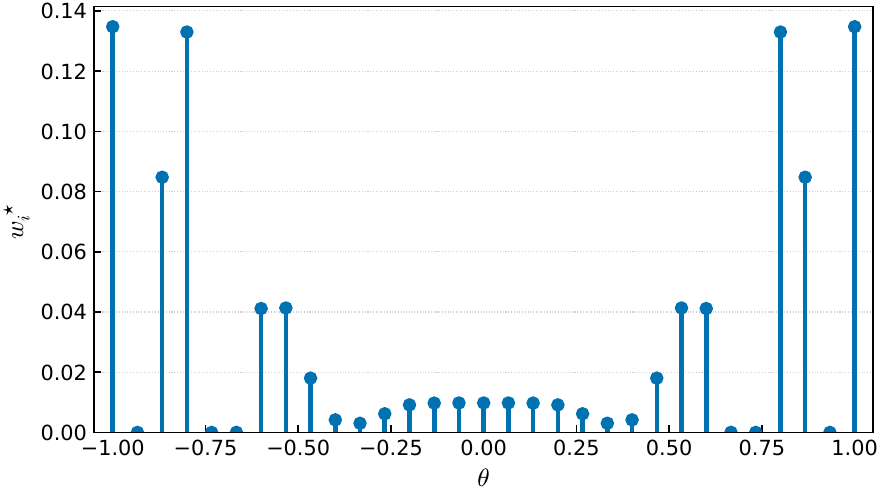}
\caption{
Least favorable prior $w^\star$ on the nuisance-parameter grid for the
three-state qutrit model.
}
\label{fig:qutrit_lfp}
\end{figure}

Table~\ref{tab:qutrit_worst_case_success} summarizes the worst-case success probabilities and the corresponding nuisance-parameter values for the four measurements. The global LFP-based measurement achieves the highest worst-case success probability, $0.86$, followed by the local-product LFP-based measurement at $0.83$, the uniform-prior Bayes measurement at $0.79$, and the reference-point measurement at $0.47$. The worst-case location also depends strongly on the measurement: the reference-point measurement is limited by the endpoints of the parameter interval, whereas the worst cases of the LFP-based measurements occur at different, including interior, parameter values. This observation further illustrates that the difficulty of robust discrimination is determined not only by the magnitude of the perturbation but also by its interplay with the chosen measurement. The optimization-grid and evaluation-grid values agree to the reported precision for all four strategies.

% Table~\ref{tab:qutrit_worst_case_success} summarizes the worst-case success probabilities.
% The reported gap is the difference between the worst-case success probability on the optimization grid and that on the finer evaluation grid.

\begin{table}[t]
\centering
\caption{
Worst-case success probabilities for the three-state qutrit model.
}
\label{tab:qutrit_worst_case_success}
\begin{tabular}{lccc}
\toprule
Measurement & Optimization grid & Evaluation grid & Worst $\theta$ \\
\midrule
$M_{\rm ref}$
& 0.47 & 0.47 & $\pm \beta$ \\
$M_{\rm unif}$
& 0.79 & 0.79 & $\pm 0.94$ \\
$M_{\rm LFP}^{\rm local}$
& 0.83& 0.83 & $  0$ \\
$M_{\rm LFP}^{\rm global}$
& 0.86 & 0.86 & $ \pm 0.65$ \\
\bottomrule
\end{tabular}
\end{table}
%%%%%%%%%%%%%%%%%%%%%%%%%%%%%%%%%%%%%%%%%%%%%%%%%%%%%%%%%%%%%%%%%%%%%%%%%%
\section{Discussion}
\label{sec:discussion}
%%%%%%%%%%%%%%%%%%%%%%%%%%%%%%%%%%%%%%%%%%%%%%%%%%%%%%%%%%%%%%%%%%%%%%%%%%

\subsection{Physical meaning of the LFP}

In the present setting, $\theta$ represents an unknown external condition, such as a reference-frame mismatch, phase drift, polarization rotation, calibration error, or slowly varying unitary drift.
The observer is not trying to estimate this parameter.
The goal is to identify the label $a$ despite not knowing $\theta$.

The LFP can therefore be interpreted as the distribution over unknown conditions that makes the discrimination task most difficult.
% In the risk formulation,
% \begin{equation*}
%     w^\star\in\arg\max_w \inf_M r_n(M,w).
% \end{equation*}
This is the distribution chosen by a hypothetical adversary that tries to maximize the Bayes risk while the observer chooses the best measurement for that distribution~\cite{ferguson1967}.
Positive mass in the LFP indicates nuisance-parameter values that are active worst cases for the minimax measurement.

If the LFP concentrates at the endpoints of a parameter interval, the largest unitary mismatch dominates the robust performance.
If the LFP also assigns weight to interior points, then the difficulty is not determined solely by perturbation strength.
It also reflects the geometry of the candidate states, the direction of the unitary action, and competition among the POVM elements.
Thus the LFP is not merely an auxiliary object in a minimax proof; it is a diagnostic tool for understanding the geometry of robust quantum discrimination.

\subsection{Flattening of success-probability profiles}

The most visible effect of the LFP-based measurement is the flattening of the success-probability profile
\begin{equation*}
    \theta\longmapsto P_{\rm succ}^{(n)}(M,\theta).
\end{equation*}
The reference-point measurement can achieve excellent performance at $\theta_{\rm ref}$, but it may fail away from that point.
The uniform-prior measurement generally improves average stability but may still underweight rare or localized difficult parameter values.
The LFP-based measurement, by contrast, directly targets the worst-case risk.

On a finite grid, the equalization property implies
\begin{equation*}
    P_{\rm succ}^{(n)}(M_{\rm LFP}^{\rm global},\theta_i)
    =
    V_{n,N}^{\rm succ}
    \qquad
    (w_i^\star>0),
\end{equation*}
whenever the saddle-point condition holds.
The flattening observed numerically is therefore a consequence of the minimax--LFP duality rather than a numerical coincidence.

\subsection{Robustness versus reference-point performance}

Robustness generally comes with a tradeoff.
The reference-point measurement is optimal at $\theta_{\rm ref}$ and can be preferable when the experimental condition is known to be close to that point.
The LFP-based measurement may sacrifice some nominal performance in order to increase
\begin{equation*}
    \inf_{\theta\in\Theta} P_{\rm succ}^{(n)}(M,\theta).
\end{equation*}
This is the natural tradeoff in minimax decision theory.

Thus the LFP measurement should not be interpreted as universally best under every criterion.
It is the appropriate choice when the objective is to control the worst-case risk, especially when the nuisance parameter is poorly calibrated, adversarial, or not described by a reliable physical prior.

\subsection{Scalability and limitations}

The framework applies to any finite number of hypotheses, finite-dimensional Hilbert spaces, and general nuisance-parameter models.
Nevertheless, the computational cost grows rapidly.
For a single-copy dimension $d$, the $n$-copy Hilbert space has dimension $d^n$.
The POVM elements and the dual variable in the SDP are therefore $d^n\times d^n$ matrices.
This limits direct collective-measurement optimization at large $n$.

The grid size $N$ also affects the computation.
For multiparameter nuisance models, a naive grid suffers from the curse of dimensionality.
Adaptive grids, sampling-based methods, cutting-plane methods, or exchange methods may be needed for larger parameter spaces~\cite{OustryCerulli2025,Jungen2022}.
Developing scalable algorithms for high-dimensional nuisance-parameter spaces is an important direction for future work.

Another limitation is that the finite-grid SDP solves the discretized minimax problem exactly, not the continuous problem directly.
One should therefore evaluate the resulting measurement on a finer grid and check stability under changes of $N$.
In the present qutrit example, the optimization-grid and evaluation-grid worst-case values agree to the reported precision, supporting the robustness of the discretization.

Finally, although the present work focuses on common unitary perturbations, the same minimax--LFP framework can be extended to parameter-dependent channels
\begin{equation*}
    \rho_a\longmapsto \mathcal{E}_\theta(\rho_a).
\end{equation*}
More general quantum-process discrimination problems can likewise be
formulated as convex optimization problems, including minimax variants
\cite{nakahira2021process}.
This would include loss, depolarization, amplitude damping, and other nonunitary noise mechanisms.
In such models, the spectra and distinguishability of the states may themselves vary with $\theta$, and the structure of the LFP may be more complex.

%%%%%%%%%%%%%%%%%%%%%%%%%%%%%%%%%%%%%%%%%%%%%%%%%%%%%%%%%%%%%%%%%%%%%%%%%%
\section{Conclusion}
\label{sec:conclusion}
%%%%%%%%%%%%%%%%%%%%%%%%%%%%%%%%%%%%%%%%%%%%%%%%%%%%%%%%%%%%%%%%%%%%%%%%%%

We formulated multi-hypothesis quantum-state discrimination under an unknown common unitary perturbation as a minimax decision problem with a nuisance parameter.
The label $a$ is the object to be identified, whereas $\theta$ is an unknown external condition that affects the performance of a fixed measurement.

The theoretical basis of our construction is the correspondence between least favorable priors and minimax measurements.
For a finite grid of nuisance-parameter values, the LFP is a prior over $\theta$ that maximizes the Bayes-optimal risk, and a Bayes measurement for this prior gives a minimax measurement when the saddle-point conditions are satisfied.
We showed that the finite-grid minimax measurement and its least favorable prior arise as the primal and dual solutions, respectively, of a pair of semidefinite programs.

The qutrit numerical example demonstrates that the method is not restricted to binary or qubit discrimination.
For three nonorthogonal qutrit states subject to an unknown common unitary perturbation, the LFP-based measurement improves the worst-case success probability and flattens the success-probability profile.
The LFP identifies the nuisance-parameter regions that actively constrain robust performance.

These results show that least favorable priors are not only abstract objects in minimax quantum decision theory.
They provide a constructive and interpretable tool for designing robust quantum measurements under experimental uncertainty.
Future work includes continuous-parameter algorithms, multiparameter nuisance models, nonunitary parameter-dependent channels, and symmetry-based reductions for collective measurements.

\begin{acknowledgments}
    This study was supported by JSPS KAKENHI Grant Numbers JP23H01432, JP23K11006 and JST BOOST, Japan Grant Number JPMJBS2402.
\end{acknowledgments}
%%%%%%%%%%%%%%%%%%%%%%%%%%%%%%%%%%%%%%%%%%%%%%%%%%%%%%%%%%%%%%%%%%%%%%%%%%
\appendix
%%%%%%%%%%%%%%%%%%%%%%%%%%%%%%%%%%%%%%%%%%%%%%%%%%%%%%%%%%%%%%%%%%%%%%%%%%

\section{Numerical construction}
\label{sec:numerical_construction}

The preceding section establishes the theoretical connection between
least favorable priors and minimax measurements.
To use this characterization for numerical measurement design, we need
a concrete procedure for computing both quantities.
In this section, we formulate the finite-grid problem introduced in
Sec.~\ref{sec:formulation} in terms of semidefinite programming.
We first recall the SDP formulation of the Bayes-optimal measurement
for a fixed nuisance prior.
We then formulate the finite-grid minimax problem directly as a primal
SDP and show that its dual problem yields a least favorable prior.
This primal--dual formulation directly provides a minimax measurement
and the corresponding LFP, avoiding any ambiguity associated with
nonunique Bayes-optimal measurements.

%%%%%%%%%%%%%%%%%%%%%%%%%%%%%%%%%%%%%%%%%%%%%%%%%%%%%%%%%%%%%%%%%%%%%%%%%%

\subsection{Bayes-optimal POVM as an SDP}

For a fixed nuisance prior, the discrimination problem reduces to an
ordinary minimum-error discrimination problem for the
nuisance-averaged ensemble.
Minimum-error quantum-state discrimination admits a
semidefinite-programming formulation, which provides both the optimal
POVM and a dual representation of the optimal success probability
\cite{eldar2003,nakahira2015}.

For fixed $w\in\Delta_N$, define the nuisance-averaged states by
Eq.~\eqref{eq:discrete_average_state}.
The Bayes-optimal success probability is
\begin{align}
    S_n(w)
    =
    \sup_{\{M_a\}}
    \quad&
    \sum_{a=1}^K q_a
    \Tr\!\left[
        \bar\rho_a^{(n)}(w)M_a
    \right]
    \label{eq:bayes_primal_sdp}\\
    \mathrm{s.t.}
    \quad&
    M_a\succeq0,\qquad a=1,\ldots,K,\notag\\
    &
    \sum_{a=1}^K M_a=\id .\notag
\end{align}
The corresponding Bayes-optimal risk is
\begin{equation*}
    \beta_n(w)=1-S_n(w).
\end{equation*}

The dual SDP is
\begin{align}
    S_n(w)
    =
    \inf_{Z=Z^\dagger}
    \quad&
    \Tr Z
    \label{eq:bayes_dual_sdp}\\
    \mathrm{s.t.}
    \quad&
    Z\succeq q_a\bar\rho_a^{(n)}(w),
    \qquad a=1,\ldots,K .\notag
\end{align}
Slater's condition holds, for example by choosing
$M_a=\id/K$, and hence strong duality holds
\cite{boyd2004}.
The optimal primal and dual variables satisfy the
Holevo--Helstrom conditions
\begin{equation}
    Z^\star(w)\succeq
    q_a\bar\rho_a^{(n)}(w),
    \qquad a=1,\ldots,K,
    \label{eq:hh_condition_1}
\end{equation}
and
\begin{equation}
    \left[
        Z^\star(w)-q_a\bar\rho_a^{(n)}(w)
    \right]
    M_a^\star(w)
    =
    0,
    \qquad a=1,\ldots,K .
    \label{eq:hh_condition_2}
\end{equation}

%%%%%%%%%%%%%%%%%%%%%%%%%%%%%%%%%%%%%%%%%%%%%%%%%%%%%%%%%%%%%%%%%%%%%%%%%%

\subsection{Direct SDP for the finite-grid minimax measurement}

We next formulate the finite-grid minimax problem directly.
For each grid point $\theta_i$, define
\begin{equation*}
    P_i(M)
    :=
    \sum_{a=1}^K
    q_a
    \Tr\!\left[
        \sigma_{a,i}^{(n)}M_a
    \right].
\end{equation*}
As defined in Sec.~\ref{sec:formulation}, the finite-grid minimax
success probability is
\begin{equation*}
    V_{n,N}^{\rm succ}
    =
    \sup_M
    \min_{1\le i\le N} P_i(M).
\end{equation*}

Introducing an auxiliary variable $t$ that lower-bounds the success
probability at every grid point, this problem can be written as the SDP
\begin{align}
    V_{n,N}^{\rm succ}
    =
    \sup_{\{M_a\},t}
    \quad&
    t
    \label{eq:minimax_primal_sdp}\\
    \mathrm{s.t.}
    \quad&
    \sum_{a=1}^K
    q_a
    \Tr\!\left[
        \sigma_{a,i}^{(n)}M_a
    \right]
    \ge t,
    \qquad i=1,\ldots,N,
    \notag\\
    &
    M_a\succeq0,
    \qquad a=1,\ldots,K,
    \notag\\
    &
    \sum_{a=1}^K M_a=\id .
    \notag
\end{align}

Let $M^\star=\{M_a^\star\}_{a=1}^K$ and $t^\star$ be an
optimal solution of Eq.~\eqref{eq:minimax_primal_sdp}.
By construction,
\begin{equation}
    t^\star
    =
    V_{n,N}^{\rm succ},
    \qquad
    P_i(M^\star)\ge V_{n,N}^{\rm succ}
    \quad
    \text{for all }i .
    \label{eq:minimax_primal_property}
\end{equation}
Therefore, $M^\star$ is a minimax measurement on the finite
optimization grid.

This direct formulation is useful because the measurement obtained
from Eq.~\eqref{eq:minimax_primal_sdp} is minimax by construction.
In particular, it does not require selecting a particular
Bayes-optimal measurement from a possibly nonunique set of solutions.

%%%%%%%%%%%%%%%%%%%%%%%%%%%%%%%%%%%%%%%%%%%%%%%%%%%%%%%%%%%%%%%%%%%%%%%%%%

\subsection{Dual SDP and the least favorable prior}

We now derive the dual of Eq.~\eqref{eq:minimax_primal_sdp}.
Introduce a nonnegative dual variable $w_i$ for each constraint
$P_i(M)\ge t$, and a Hermitian matrix $Z$ for the POVM completeness
constraint.
The Lagrangian is
\begin{align}
    \mathcal{L}
    =
    t
    +
    \sum_{i=1}^N
    w_i
    \left[
        P_i(M)-t
    \right]
    +
    \Tr\!\left[
        Z
        \left(
            \id-\sum_{a=1}^K M_a
        \right)
    \right].
\end{align}
Using the definition of $P_i(M)$, this becomes
\begin{align}
    \mathcal{L}
    =
    &
    \left(
        1-\sum_{i=1}^N w_i
    \right)t
    +
    \Tr Z
    \notag\\
    &
    +
    \sum_{a=1}^K
    \Tr\!\left[
        \left(
            q_a\sum_{i=1}^N
            w_i\sigma_{a,i}^{(n)}
            -Z
        \right)
        M_a
    \right].
\end{align}

The supremum over $t$ is finite only when
\begin{equation*}
    \sum_{i=1}^N w_i=1.
\end{equation*}
Similarly, the supremum over $M_a\succeq0$ is finite only when
\begin{equation*}
    Z\succeq
    q_a\sum_{i=1}^N
    w_i\sigma_{a,i}^{(n)},
    \qquad a=1,\ldots,K .
\end{equation*}
Hence the dual SDP is
\begin{align}
    s_{n,N}^\star
    =
    \inf_{w,Z}
    \quad&
    \Tr Z
    \label{eq:joint_sdp_lfp}\\
    \mathrm{s.t.}
    \quad&
    Z\succeq
    q_a\sum_{i=1}^N
    w_i\sigma_{a,i}^{(n)},
    \qquad a=1,\ldots,K,
    \notag\\
    &
    w_i\ge0,
    \qquad i=1,\ldots,N,
    \notag\\
    &
    \sum_{i=1}^N w_i=1,
    \notag\\
    &
    Z=Z^\dagger .
    \notag
\end{align}
Thus, the dual variables
\begin{equation*}
    w=(w_1,\ldots,w_N)
\end{equation*}
form a probability distribution on the nuisance-parameter grid.

For fixed $w$, the minimization over $Z$ in
Eq.~\eqref{eq:joint_sdp_lfp} is precisely the Bayes dual SDP
\eqref{eq:bayes_dual_sdp}, since
\begin{equation*}
    \bar\rho_a^{(n)}(w)
    =
    \sum_{i=1}^N
    w_i\sigma_{a,i}^{(n)}.
\end{equation*}
Therefore, the dual problem can equivalently be written as
\begin{equation*}
    \inf_{w\in\Delta_N} S_n(w).
\end{equation*}
Since
\begin{equation*}
    \beta_n(w)=1-S_n(w),
\end{equation*}
an optimal dual distribution $w^\star$ satisfies
\begin{equation*}
    w^\star
    \in
    \arg\max_{w\in\Delta_N}\beta_n(w).
\end{equation*}
Hence $w^\star$ is a least favorable prior on the finite grid.

Slater's condition also holds for
Eq.~\eqref{eq:minimax_primal_sdp}.
For example, one may choose $M_a=\id/K$ and take $t$
strictly smaller than $\min_i P_i(M)$.
Strong duality therefore gives
\begin{equation}
    V_{n,N}^{\rm succ}
    =
    t^\star
    =
    s_{n,N}^\star
    =
    \Tr Z^\star,
    \label{eq:minimax_strong_duality}
\end{equation}
and consequently
\begin{equation*}
    V_{n,N}^{\rm risk}
    =
    1-V_{n,N}^{\rm succ}.
\end{equation*}

Thus, the primal and dual SDPs provide, respectively, a finite-grid
minimax measurement $M^\star$ and a corresponding least favorable
prior $w^\star$.

%%%%%%%%%%%%%%%%%%%%%%%%%%%%%%%%%%%%%%%%%%%%%%%%%%%%%%%%%%%%%%%%%%%%%%%%%%

\subsection{Complementary slackness and numerical implementation}

The primal--dual formulation also provides a direct relation between
the support of the LFP and the active worst-case nuisance parameters.
Complementary slackness for the constraints $P_i(M)\ge t$ gives
\begin{equation}
    w_i^\star
    \left[
        P_i(M^\star)-V_{n,N}^{\rm succ}
    \right]
    =
    0,
    \qquad i=1,\ldots,N.
    \label{eq:complementary_lfp}
\end{equation}
Since
\begin{equation*}
    P_i(M^\star)
    \ge
    V_{n,N}^{\rm succ}
\end{equation*}
for every $i$, Eq.~\eqref{eq:complementary_lfp} implies
\begin{equation}
    w_i^\star>0
    \quad\Longrightarrow\quad
    P_i(M^\star)
    =
    V_{n,N}^{\rm succ}.
    \label{eq:lfp_active_constraint}
\end{equation}
Thus, every grid point carrying positive LFP weight is an active
worst case of the minimax measurement.
This is the SDP counterpart of the equalization property derived in
Sec.~\ref{sec:lfp_minimax}.

The matrix complementary-slackness conditions further imply
\begin{equation}
    \left[
        Z^\star
        -
        q_a\bar\rho_a^{(n)}(w^\star)
    \right]
    M_a^\star
    =
    0,
    \qquad a=1,\ldots,K.
    \label{eq:minimax_matrix_cs}
\end{equation}
Together with
\begin{equation*}
    Z^\star
    \succeq
    q_a\bar\rho_a^{(n)}(w^\star),
\end{equation*}
these are precisely the Holevo--Helstrom optimality conditions for
the nuisance-averaged ensemble associated with $w^\star$.
Therefore, the minimax measurement $M^\star$ obtained from the primal
SDP is simultaneously Bayes optimal for the corresponding LFP
$w^\star$, in agreement with Theorem~\ref{thm:lfp_minimax}.

In the numerical calculations of Sec.~\ref{sec:numerical_results},
we use the optimal primal POVM as
$M_{\rm LFP}^{\rm global}$ and the optimal dual weights as the
corresponding LFP $w^\star$.
Consequently, $M_{\rm LFP}^{\rm global}$ is minimax on the
optimization grid by construction, and no additional selection among
possibly nonunique Bayes-optimal measurements is required.

\bibliographystyle{apsrev4-2}
\bibliography{fujiki_refrence}

\end{document}